\newcommandx{\revision}[2][1=]{\todo[linecolor=ForestGreen,backgroundcolor=ForestGreen!25,bordercolor=ForestGreen,#1]{#2}} 
\newcommand{\from}{{:}\penalty\binoppenalty\mskip\thickmuskip}
\newcommand{\indicatorset}[1]{\mathds{1}_{#1}}
\newcommand{\toinucp}{\xrightarrow{{\lower2pt\hbox{$\!\mathrm{\scriptstyle ucp}\!$}}}}
\theoremstyle{plain}
\newtheorem{theorem}{Theorem}[section]
\newtheorem{proposition}[theorem]{Proposition}
\theoremstyle{definition}
\newtheorem{definition}[theorem]{Definition}
\theoremstyle{remark}
\newtheorem{remark}[theorem]{Remark}
\DeclareMathOperator*{\argmin}{arg\,min}
\DeclareOldFontCommand{\bf}{\normalfont\bfseries}{\mathbf}
\newcommand{\makeoverbar}[7]{%
\setbox0=\hbox{$\m@th#2\mkern#5mu{{}#3{}}\mkern#6mu$}%
\setbox1=\null \dimen@=#4\fontdimen8#13 \dimen@=3.5\dimen@
\advance\dimen@ by \ht0 \dimen@=-#7\dimen@ \advance\dimen@ by \wd0
\ht1=\ht0 \dp1=\dp0 \wd1=\dimen@
\dimen@=\fontdimen8#13 \fontdimen8#13=#4\fontdimen8#13
\rlap{\hbox to \wd0{$\m@th\hss#2{\overline{\box1}}\mkern#5mu$}}
\fontdimen8#13=\dimen@}
\def\makeunderbar#1#2#3#4#5#6#7{%
\setbox0=\hbox{$\m@th#2\mkern#5mu{{}#3{}}\mkern#6mu$}%
\setbox1=\null \dimen@=#4\fontdimen8#13 \dimen@=3.5\dimen@
\advance\dimen@ by \dp0 \dimen@=-#7\dimen@ \advance\dimen@ by \wd0
\ht1=\ht0 \dp1=\dp0 \wd1=\dimen@
\dimen@=\fontdimen8#13 \fontdimen8#13=#4\fontdimen8#13
\rlap{\hbox to \wd0{$\m@th\hss#2{\underline{\box1}}\mkern#5mu$}}
\fontdimen8#13=\dimen@}
\def\thanks#1{\protected@xdef\@thanks{\@thanks
        \protect\footnotetext{#1}}}
\let\oldbibliography\thebibliography 
\renewcommand{\thebibliography}[1]{%
  \oldbibliography{#1}%
  \setlength{\itemsep}{\smallskipamount}
  \setlength{\parskip}{0pt plus 1pt}} 
\title{Optimal Risk Mitigation by \\ Deep Reinsurance} 
\author[1]{Aleksandar~Arandjelovi\'{c}} 
\author[2]{Julia~Eisenberg%
\thanks{Address correspondence to Aleksandar~Arandjelovi\'{c}. Current address: Department of Mathematics, ETH Zurich, R\"amistrasse 101, 8092 Zurich, Switzerland; e-mail:\ \href{mailto:aleksandar.arandjelovic@math.ethz.ch}{aleksandar.arandjelovic@math.ethz.ch}.}} 
\affil[1]{Institute for Statistics and Mathematics, Vienna University of Economics and Business \newline Welthandelsplatz 1, 1020 Vienna, Austria} 
\affil[2]{Research Unit of Financial and Actuarial Mathematics, TU Wien \newline Wiedner Hauptstra{\ss}e 8--10, 1040 Vienna, Austria} 
\begin{document}

\maketitle 

\vspace{-36.5pt} 

\begin{abstract} 
We consider an insurance company which faces financial risk in the form of insurance claims and market-dependent surplus fluctuations. 
The company aims to simultaneously control its terminal wealth (e.g., at the end of an accounting period) and the ruin probability in a finite time interval by purchasing reinsurance. 
The target functional is given by the expected utility of terminal wealth perturbed by a modified Gerber--Shiu penalty function. 
We solve the problem of finding the optimal reinsurance strategy and the corresponding maximal target functional via neural networks. 
The procedure is illustrated by a numerical example, where the surplus process is given by a Cramér--Lundberg model perturbed by a mean-reverting Ornstein--Uhlenbeck process. 

\par\smallskip\noindent\textbf{Keywords:} Reinsurance; deep learning; perturbed risk process; multi-objective optimization. 

\par\smallskip\noindent\textbf{JEL codes:} C61; C63; G22. 
\end{abstract} 

\section{Introduction}\label{sec:introduction} 
In 1903 Filip Lundberg suggested modeling the surplus of an insurance company by a constant drift minus a compound Poisson process with independent, identically distributed nonnegative jumps \citep{lundberg03kollektivrisker}. 
The drift is interpreted as the premium rate, the jumps represent the claim sizes, whereas the Poisson process counts the claims. 
This setting, widely known as the classical risk model or Cram\'{e}r--Lundberg model, gives a clear but simplified picture of the insurer's balance. 

For an insurance company, claims are not the only source of uncertainty. 
For example, the increase or decrease in the number of customers (see \citet{braunsteins23fluctuating}) or a random interest rate (see \citet{eisenberg15optimal}) will impact the risk process. 
Also, reputational considerations, investments in positively correlated financial markets, and correlations between different business branches and collectives of insured will play a considerable role. 
Recently, models involving a dependence between the actuarial business and financial markets offering investment possibilities have been considered; see, for instance, \citet{ceci22optimal}, \citet{leimcke20bayesian}, and references therein. 

An important modification of the classical risk model is due to \citet{gerber70renewal}, who suggested including an additional source of uncertainty -- a Brownian motion. 
This new, perturbed classical risk model can better account for reality while still being a one-dimensional Markov process. 
The latter property makes the perturbed process a popular model for the surplus of an insurance company, see, for instance, \citet{dufresne91risk}, \citet{tsai01diffusion}, \citet{cheung23joint} and references therein. 

However, by adding an additional source of uncertainty, in many optimization settings, the calculations and proofs become much more complicated, resulting in the use of the viscosity approach, see for instance \citet{eisenberg15optimal}. 
The viscosity approach allows to find the optimal strategy numerically, since the corresponding Hamilton--Jacobi--Bellman equation can be tackled using the finite difference method. 
To avoid this approach, one may discretize the surplus process, thereby allowing controls only at discrete time points. 
In particular, for a finite time horizon, this method has the advantage that all control strategies can be written in feedback form, i.e.\ depending on the finite number of the observed state values. 

In actuarial control theory, after the model for the surplus has been chosen, the question arises which risk measure will be considered as a target to optimize. 
The most famous and extensively studied risk measure, suggested by \citet{lundberg03kollektivrisker}, is the ruin probability, i.e.\ the probability that the surplus becomes negative in finite time. 
A vast number of results have appeared over the last century concerning the minimization of the ruin probability in different settings. 
We refer the interested reader to \citet{schmidli08control}, \citet{asmussen10ruin} and references therein. 
Alternative risk measures that have been considered over the last decades include in particular expected dividends, expected utility of terminal wealth, and expected capital injections, see, for instance, \citet{albrecher17reinsurance}, \citet{avanzi09strategies}, \citet{albrecher09optimality} and references therein. 

Measuring the utility of the terminal wealth was first suggested by \citet{borch61utility} and had since become an important risk indicator in insurance mathematics. 
The wealth at some finite time $T$ -- for instance, the time of a regulatory check, can provide useful clues about the company's wellbeing. 
Including the ruin probability in the target functional has been considered, for instance, in \citet{hipp18value}, where the main target is to maximize dividends, in the more recent paper \cite{albrecher25dividend}, and in \citet{thonhauser07dividend}, where the time value of ruin is taken into account. 
However, the multi-objective goal of simultaneously optimizing a risk measure (such as, in our case, the expected utility of terminal wealth) along with the ruin probability remains largely unexplored. 

In the present manuscript, we consider a general extension of the perturbed risk model that, to the best of our knowledge, has not been studied in the existing actuarial literature. 
The surplus of an insurance company is modelled by a jump process perturbed by a general diffusion (not necessarily a Brownian motion) on an interval $[0,T]$ with a deterministic time horizon $T$. 
This implies that the problem we consider is 3-dimensional and depends on the time to maturity, on the state of the jump process and on the state of the diffusion. 
The functional to maximize is given by the expected utility of terminal wealth perturbed by a modified Gerber--Shiu \citep{gerber98ruin} penalty function. 
It is optimized over the class of reinsurance policies, which are arguably the most popular type of controls in the literature (with investments, dividends and capital injections being common alternatives). 
A substantial body of literature exists on utility maximization and ruin minimization with reinsurance strategies, with notable contributions including \citet{schmidli02ruin}, \citet{promislow05ruin}, \citet{bai08optimal}, \citet{schmidli01proportional} and \citet{taksar03optimal}. 

The role of the penalty function in this paper is twofold. 
It rewards the insurer if the surplus remains positive at all times (i.e.\ in the case of no ruin), while a negative surplus is penalized. 
In addition, one can opt for different weights for the expected utility and for the penalty function, depending on the individual preferences of the insurer. 
This means that we allow the insurer to prioritize their immediate needs: higher utility of the terminal surplus with higher risk or a safer play. 

As it seems unlikely that this problem can be solved explicitly, we seek for the optimal strategy in the class of feedback controls using machine learning techniques. 
More specifically, we use neural networks. 
Neural networks have become a popular tool in actuarial risk management, having been applied to mortality modelling, claims reserving, non-life insurance pricing and telematics. 
For a survey on recent advances of artificial intelligence in actuarial science, see \citet{richman21review} and references therein. 

The task of finding optimal reinsurance (and dividend) strategies with neural networks has been studied in \citet{cheng20optimal} and \citet{jin21hybrid}. 
There, the authors develop a hybrid Markov chain approximation-based iterative deep learning algorithm to maximize expected dividends under consideration of the time value of ruin. 
In contrast to \citet{cheng20optimal} and \citet{jin21hybrid}, we include the ruin probability explicitly as a risk measure (besides the expected utility of terminal wealth) that is optimized. 
Moreover, we formulate our optimization problem as an empirical risk minimization problem, which can be solved efficiently by stochastic gradient descent methods, even in highly complex model settings. 

The primary contributions of this paper are as follows: 
\begin{enumerate} 
\item We introduce a novel framework for optimizing reinsurance strategies using deep learning techniques in order to maximize a target functional comprising a utility function penalized by an extended Gerber--Shiu function. 
The proposed method allows the insurer to balance between maximizing the expected utility of terminal wealth, and minimizing the probability of ruin. 
\item By drawing connections to binary classification problems and surrogate loss functions, we demonstrate how the optimization problem can be solved by empirical risk minimization, a method that, when combined with stochastic gradient descent, is particularly useful for optimizing neural networks. 
\item We illustrate our proposed methodology by a numerical example, where the surplus process is given by a Cramér--Lundberg model perturbed by a mean-reverting Ornstein--Uhlenbeck process. 
Our findings demonstrate the effectiveness of our method in finding optimal reinsurance strategies, and highlight the large scope of the approach. 
\end{enumerate} 

The paper is organized as follows. 
Section \ref{sec:model} gives a mathematical description of the considered model. 
Section \ref{sec:algopolicies} introduces algorithmic reinsurance strategies. 
Section \ref{sec:numerics} exemplifies our approach with numerical experiments. 
Section \ref{sec:conclusion} concludes. 

\section{Model description}\label{sec:model} 
We consider an insurer with a deterministic finite planning horizon $T>0$. 
The insurer manages a portfolio of risks that generate premium payments. 
To mitigate potential large financial losses from unexpectedly high claim frequencies or sizes, the insurer can enter into reinsurance agreements. 
For a given number of time steps $n \in \na$, these agreements are re-negotiated at time points $(t_{i})_{i=0}^{n}$, $0=t_{0}<t_{1}<\hdots<t_{n-1}<T$ for the coverage period $(t_{i}, t_{i+1}]$, where we set $t_{n}=T$. 

Our study is based on a probability space $(\Omega, \Fcal, \PP)$. 
The flow of information is modeled by an $\mathbb{R}^{r}$-valued stochastic process $Y$, where $r \in \na$ is a fixed dimension. 
The process $Y$ induces a filtration $\FF = (\Fcal_{i})_{i=0}^{n}$ of $\Fcal$, allowing for the possibility that $\Fcal_{n}\neq\Fcal$. 
That is, there might be some information that is not revealed even at maturity $T$. 
We assume that $\mathbb{R}^{r}$ is endowed with the Borel-$\sigma$-algebra $\mathcal{B}_{\re^r}$. 
All stochastic processes are indexed via the discrete time points $(t_{i})_{i=0}^{n}$. 

Let $p = (p_{i})_{i=0}^{n}$ denote the $\FF$-adapted premium process. 
The payment $p_{i}$ ensures insurance coverage over the period $(t_{i}, t_{i+1}]$. 
As in our numerical illustrations in Section~\ref{sec:numerics}, $p_{i}$ might be computed according to the expected value principle with a positive safety loading. 
Alternatively, $p_{i}$ can be calculated using any other known premium calculation principle such as, for instance, the standard deviation or zero-utility principle. 
The premium may also depend on the number and size of the previously occurred claims. 
However, for the sake of clarity, we concentrate on the expected value principle in order to better explain the features of our model and leave further extensions to future research. 
Since the time horizon $T$ is assumed to be finite, we may assume, without loss of generality, that $p_{n}=0$. 

Inspired by the Cram\'{e}r--Lundberg model, let $N = (N_{i})_{i=0}^{n}$ be an $\FF$-adapted, $\na_{0}$-valued and increasing process with $N_{0} = 0$, and $N_{i}$ represents the number of claims up to time $t_{i}$. 
The $\replus$-valued insurance claims are denoted by $(Z_{i})_{i \ge 1}$. 
We also consider a real-valued, $\FF$-adapted process $L = (L_{i})_{i=0}^{n}$ which represents random fluctuations, such as small claims and variations in premium income. 

\begin{remark} 
A distinguishing feature is that $p$, $N$, $(Z_{i})_{i \ge 1}$ and $L$ are not assumed to be independent. 
\end{remark} 

The reinsurance agreement is characterized by a reinsurance strategy $b=(b_{i})_{i=0}^{n}$. 
For illustrative purposes, we assume the agreement to be proportional, that is, $b$ is an $\FF$-adapted process with values in $[0, 1]$. 
Here, $b_{i}$ represents the retention level, and $(1-b_{i})$ is the proportion of claims covered by the reinsurer during the period $(t_{i}, t_{i+1}]$. 
For notational convenience, we set $b_{n} = 1$. 
The reinsurance premium is given by the process $c = (c_{i})_{i=0}^{n}$. 
We assume that $c_{i} = c(b_{i})$ for some continuous cost function $c \from [0, 1] \to \re$. 

For each reinsurance strategy $b$, the surplus process $X^{b} = (X_{i}^{b})_{i=0}^{n}$ is defined by $X_{0}^{b}=x$, and 
\begin{equation}\label{eq:surplus} 
    X_{i+1}^{b} = X_{i}^{b} + p_{i} - c(b_{i}) + L_{i} - b_{i} \sum_{j=N_{i}+1}^{N_{i+1}}Z_{j}, \quad i=0,1,\hdots,n-1, 
\end{equation} 
where $x\in\re$ is the initial capital. 
The surplus process without reinsurance is denoted $X = (X_{i})_{i=0}^{n}$. 
Note that the sum in Equation~\eqref{eq:surplus} could be empty, namely when $N_i = N_{i+1}$ (i.e.\ no new claims during the time interval $(t_i, t_{i+1}]$). 
In this scenario we convene that the sum is zero. 

Since the seminal paper by \citet{gerber70renewal}, introducing fluctuations into the classical Cramér--Lundberg model has become a significant area of research within the actuarial community. 
To the best of our knowledge, no prior work has considered perturbing the classical risk process using a general diffusion process, other than the standard Brownian motion. 
In our approach, we opted to increase the model's complexity by incorporating a disturbance term given by a general diffusion instead of, for instance, expanding the optimization problem through a multidimensional reinsurance deductible. 

Our problem formulation is kept very general, so that various notable models can be considered as special cases. 
For example, $N$ might be a self-exciting process observed at discrete time points. 
As in Section~\ref{sec:numerics}, $L$ could be (the discretization of) an Ornstein--Uhlenbeck process. 
Additionally, let us note that the formulation of this section extends naturally to the multi-dimensional case with, for instance, multiple correlated business lines. 

The insurer's preference is described by a continuous utility function $u \from \re \to \re$. 
Here, we assume that the utility function is chosen such that $\mathbb{E}[u(X_{n})]$ is finite. 
For example, in Section \ref{sec:numerics} we will choose an exponential utility function in conjunction with exponentially distributed claims. 
If one were to choose Pareto-distributed claims -- a popular choice in the literature -- then another utility function is required due to the heavy tails of the Pareto distribution. 

\begin{definition} 
A reinsurance strategy $b$ is called \textit{admissible} if $\mathbb{E}[u(X_{n}^{b})]$ is finite. 
The set of all admissible reinsurance strategies is denoted by $\mathcal{A}$. 
\end{definition} 

In our model, the insurer aims to optimize the expected utility of terminal wealth while considering the probability of ruin across all admissible reinsurance strategies. 
This is done by incorporating a penalty term whose strength is expressed by a parameter $\beta \in [0,1]$. 
That way, the target is to solve the following optimization problem 
\begin{equation}\label{goal} 
\sup_{b \in \mathcal{A}} \bigg\{ \beta \cdot \mathbb{E}[u(X_{n}^{b})] - (1-\beta) \cdot \mathbb{P}(\min_{0\le i\le n} X_{i}^{b} < 0) \bigg\}. 
\end{equation} 

\begin{remark} 
The choice of our objective function in~\eqref{goal} that combines utility of terminal wealth and the probability of ruin is motivated as follows: First of all, by using convex combinations, we include the two boundary cases $\beta=1$ and $\beta=0$ which correspond to the pure expected utility maximization and ruin probability minimization problems, respectively. 
More generally, the convex combination in~\eqref{goal}, is a linear scalarization of a multi-objective optimization problem, i.e.\ the original problem with two objectives -- the expected utility of terminal wealth and the ruin probability -- is transformed into a single-objective optimization problem. 
Our approach can also accommodate other objectives alongside those discussed in this paper, we leave the investigation of other objectives to future research. 
It is important to note that Equation~\eqref{goal} does not account for the different scales of the individual goals. This means that a weight of $\beta = 50\%$ will not necessarily yield an optimal reinsurance strategy that appropriately balances both goals by $50\%$ each. 
This is a common issue that arises when solving multi-objective optimization problems through scalarization. 
It is a subject of future research to address and resolve this challenge. 
We also refer to \citet{feinstein24frontier}, where the efficient frontier, i.e.\ the set of all optimal solutions for a multi-objective optimization problem, is approximated via neural networks. 
\end{remark} 

Note that the desired optimal strategy is not intended to be implemented in a literal manner. 
Instead, it functions as a risk measure, providing insight into the financial soundness or vulnerability of a firm. 
Should the prospect of ruin be considered acceptable, this would suggest that achieving the optimal value requires taking on significant risk that may align with the strategic preferences of the firm's management. 
However, such a risk-loving attitude would not be tolerable from a regulatory standpoint. 
Introducing a lower bound on the parameter $\beta$ could help to align the model with regulatory frameworks, which can be compared to the minimum capital requirement\footnote{For more information on the minimum capital requirement see\\ \url{https://www.bmf.gv.at/en/topics/financial-sector/finance-captital-markets-eu/solvency-ii.html}}. 

From a technical standpoint, it may be argued that the exponential utility function remains well-defined even in the presence of negative values, although this entails a corresponding reduction in utility. 
As a result, both the probability of ruin and the risk of negative surplus exert a direct influence on the strategy, irrespective of the utility function's central role. 
In this context, the emergence of negative capital does not represent a simple trade-off; rather, it is inherently penalized by both the utility formulation and the associated likelihood of ruin. 

We aim to solve the optimization problem \eqref{goal} using empirical risk minimization and stochastic approximation, a classical concept in statistical learning theory which is explained below. 

Note that our goal is not to compare the neural network approach with other methods, nor to evaluate their algorithmic efficiency. 
Instead, we seek to highlight the broad applicability of the proposed methodology across a wide range of problems. 

Given a set of training data points $\{(Y^{j}, Z^{j})\, |\, j=1,2,\hdots,m\}$ that are independent and identically distributed, and a parametrized family of predictor functions $\{f_{\theta} \colon \theta \in \Theta\}$, the goal is to minimize the empirical loss, 
\begin{equation*} 
\theta^{*} = \argmin_{\theta \in \Theta} \frac{1}{m} \sum_{j=1}^{m} \ell(f_{\theta}(Y^{j}), Z^{j}), 
\end{equation*} 
where $\ell$ denotes a loss function. 
If $\ell(f_{\theta}(Y^{j}), Z^{j})$ is differentiable with respect to $\theta$, the gradient descent method can be used to minimize the empirical loss starting from an initial guess $\theta_{0}$ and iteratively updating
\begin{equation*} 
\theta_{k+1} = \theta_{k}- \frac{\eta}{m} \sum_{j=1}^{m} \nabla_{\theta_{k}} \ell(f_{\theta_{k}}(Y^{j}), Z^{j}), \quad k=1, 2, \hdots, 
\end{equation*} 
until a pre-determined termination criterion is reached. 
Here, $\eta > 0$ denotes a learning rate, and $\nabla_{\theta} \ell(f_{\theta}(Y^{j}), Z^{j})$ denotes the gradient of $\ell(f_{\theta}(Y^{j}), Z^{j})$ with respect to $\theta$. 

For very large sample sizes, where computing the average gradient is numerically expensive, \textit{stochastic gradient descent} (SGD) is a more efficient alternative, updating parameters based on individual gradients $\nabla_{\theta} \ell(f_{\theta}(Y^{j}), Z^{j})$. 
However, due to the noisy nature of single-gradient updates, \textit{mini-batch} stochastic gradient descent is often preferred. 
This method updates parameters based on the average gradient over a subset of the training points, balancing computational efficiency and stability. 

Note that the loss $\ell(f_{\theta}(Y^{j}), Z^{j})$ is assumed to be differentiable with respect to $\theta$. 
However, in our optimization problem, the ruin probability imposes numerical challenges due to the non-smooth nature of the indicator function. 
To address this issue, Subsection~\ref{binary-classification} explores connections to binary classification by replacing the indicator function with a surrogate loss function. 
This substitution makes the problem tractable via empirical risk minimization. 
Our surrogate loss model can be interpreted as a generalized version of a Gerber--Shiu penalty function, tailored to the multi-objective setting considered here and, to the best of our knowledge, not previously used in this form. 

\subsection{Ruin probability and binary classification problems}\label{binary-classification} 
Ruin probabilities can be expressed in terms of expectations. 
To this end, given $b \in \mathcal{A}$, consider the map $F_{b}$ defined by $F_{b}(Y) = \min_{0\le i\le n}X_{i}^{b}$. 
Then one can write 
\begin{equation}\label{binary-loss} 
\mathbb{P}(\min_{0\le i\le n} X_{i}^{b} < 0) = \mathbb{P}(F_{b}(Y) < 0) = \mathbb{E}[\indicatorset{(-\infty,0)}(F_{b}(Y))], 
\end{equation} 
where $\indicatorset{A}(x)$ denotes the indicator function over the set $A$, i.e.\ $\indicatorset{A}(x) = 1$ for $x \in A$ and $\indicatorset{A}(x) = 0$ otherwise. 

Minimizing the ruin probability in \eqref{binary-loss} then amounts to finding the optimal $b\in\mathcal{A}$, such that the mapping $F_{b}$ classifies as many data points $Y(\omega)$ as possible not as ruin. 
In other words, finding the optimal reinsurance strategy $b\in\mathcal{A}$ can be identified with the equivalent task of finding the optimal classifier $F_{b^{*}} \in \{ F_{b} \colon b \in \mathcal{A} \}$ for the binary classification problem where one seeks to map all data points $\{ Y(\omega) \colon \omega \in \Omega \}$ to a non-ruin event. 
Using $Y$ as an argument to $F_{b}$ is justified by the Doob--Dynkin representation theorem, which states that all $\FF$-adapted processes can be written as functions of $Y$. 

The main issue with the indicator function is twofold: (1) It has a jump in zero, which is problematic if the distribution of the minimal wealth $F_b(Y)$ has positive probability mass in zero, and (2) it is not differentiable in zero, and the derivatives away from zero are also equal to zero. 
Therefore, optimizing \eqref{binary-loss} by using deep learning tools and empirical risk minimization, particularly minibatch stochastic gradient descent, becomes quite challenging. 
A remedy to this issue is to employ a \textit{surrogate loss function $g$} instead of the indicator function: 
\begin{equation}
\mathbb{E}[\indicatorset{(-\infty,0)}(F_{b}(Y))] \approx \mathbb{E}[g(F_{b}(Y))]. 
\end{equation} 
The surrogate loss function $g_{\gamma}(x) = 0.5 + 0.5 \tanh(-\gamma x)$ for $\gamma\in\{1,10,100\}$ is presented in Figure~\ref{penalty-functions}. 
This function will also be used (with $\gamma=10$) in the numerical study in Section~\ref{sec:numerics}. 
For theory on surrogate loss functions for binary classification problems, see \citet{bartlett06convexity,nguyen09surrogate,reid10composite}. 

\begin{figure}[htbp] 
\centering 
\captionsetup{width=0.75\textwidth,format=plain} 
\includegraphics[width=0.75\linewidth]{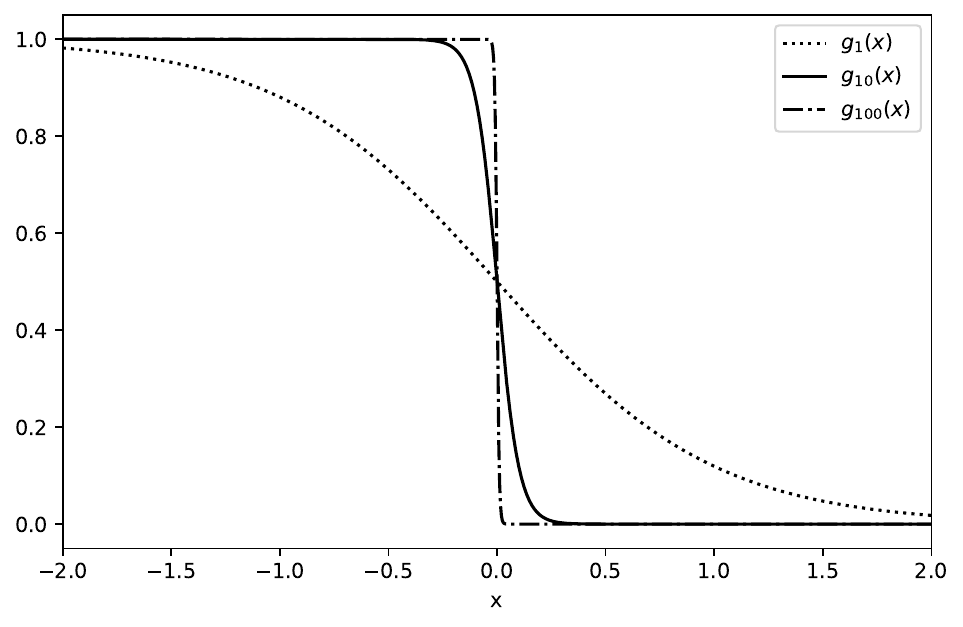}
\caption{Surrogate loss functions $g_{\gamma}$ for various choices of $\gamma$.} 
\label{penalty-functions} 
\end{figure} 

Replacing the ruin probability in \eqref{goal} by the expected surrogate loss of $F_{b}(Y)$ with respect to $g_{\gamma}$ imposes a penalty on the expected utility, which can be seen as a generalized Gerber--Shiu function \citep{gerber98ruin}. 
Indeed, our surrogate loss function allows for an approximation of the ruin probability under mild assumptions. 

\begin{proposition}\label{surrogate-limit} 
Given $b \in \mathcal{A}$, let $(g_{n})_{n \in \na}$ be a uniformly bounded sequence of functions such that, $\PP$-almost surely, $g_{n}(F_{b}(Y)) \to \indicatorset{(-\infty,0)}(F_{b}(Y))$. 
Then, 
\begin{equation*}
\lim_{n\to\infty} \mathbb{E}[g_{n}(F_{b}(Y))] = \mathbb{P}(\min_{0 \le i\le n} X_{i}^{b} < 0). 
\end{equation*} 
\end{proposition}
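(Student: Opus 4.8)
The plan is to recognize this as a direct application of the dominated convergence theorem. The hypotheses are tailored exactly to what is needed: pointwise ($\PP$-almost sure) convergence of the integrands $g_{n}(F_{b}(Y)) \to \indicatorset{(-\infty,0)}(F_{b}(Y))$, plus uniform boundedness of the sequence $(g_{n})_{n\in\na}$, which supplies the required integrable dominating function (a constant, which is integrable on a probability space).

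First I would fix $b \in \mathcal{A}$ and let $M > 0$ be a uniform bound, so that $|g_{n}(x)| \le M$ for all $n \in \na$ and all $x \in \re$; in particular $|g_{n}(F_{b}(Y))| \le M$ $\PP$-almost surely, and the constant function $M$ is $\PP$-integrable since $\PP(\Omega) = 1$. Second, I would invoke the almost-sure convergence hypothesis $g_{n}(F_{b}(Y)) \to \indicatorset{(-\infty,0)}(F_{b}(Y))$. Third, by the dominated convergence theorem,
\begin{equation*}
\lim_{n\to\infty} \mathbb{E}[g_{n}(F_{b}(Y))] = \mathbb{E}\Big[\lim_{n\to\infty} g_{n}(F_{b}(Y))\Big] = \mathbb{E}[\indicatorset{(-\infty,0)}(F_{b}(Y))].
\end{equation*}
Finally, I would close the argument by quoting the identity~\eqref{binary-loss}, which says precisely that $\mathbb{E}[\indicatorset{(-\infty,0)}(F_{b}(Y))] = \mathbb{P}(\min_{0 \le i\le n} X_{i}^{b} < 0)$; combining the two displays gives the claim.

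There is no real obstacle here — the statement is essentially a packaging of dominated convergence — but if I wanted to be careful about one point, it would be measurability: one should note that each $g_{n}(F_{b}(Y))$ is a genuine random variable, which follows since $F_{b}$ is built from the $\FF$-adapted surplus process $X^{b}$ (hence measurable by the Doob--Dynkin representation invoked earlier) and since the $g_{n}$ should be taken Borel measurable (as they are in the intended application, where each $g_{n}$ is continuous — e.g.\ $g_{\gamma}$). One might also remark that the limit function $\indicatorset{(-\infty,0)}\circ F_{b}$ is automatically measurable as an almost-sure limit of measurable functions (on a complete probability space), so the right-hand expectation is well defined. Beyond these routine checks, the proof is immediate.
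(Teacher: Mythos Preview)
Your proof is correct and is exactly the approach taken in the paper, which simply states that an application of Lebesgue's dominated convergence theorem yields the assertion. Your write-up just spells out the routine details (uniform bound as dominating function, identity~\eqref{binary-loss}) that the paper leaves implicit.
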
 

\begin{proof} 
This is a direct consequence of Lebesgue's dominated convergence theorem. 
\end{proof} 

\begin{remark} 
For our choice $g_{\gamma}$ of surrogate loss function as presented in Figure~\ref{penalty-functions}, Proposition~\ref{surrogate-limit} is applicable as $\gamma \to \infty$ if we assume that $F_{b}(Y)$ does not have a point mass in zero, i.e.\ we require $\PP(F_{b}(Y) = 0) = 0$. 
The reason is that $g_{\gamma}(0) = 1/2$, which is the only point where $g_{\gamma}$ does not converge to the indicator function $\indicatorset{(-\infty,0)}$. 
\end{remark} 

\section{Algorithmic reinsurance policies}\label{sec:algopolicies} 
We propose to solve \eqref{goal} numerically via \textit{algorithmic reinsurance policies}. 
These policies determine retention levels using neural networks that observe information to generate decisions. 
This approach is inspired by recent successful applications in quantitative finance and actuarial science. 
Popular use-cases are hedging, optimal stopping, model calibration, and scenario generation \citep{buehler19deep,becker19deep,horvath21volatility,wiese20gan}. 

Theorems that establish approximations in function space via neural networks are usually referred to as \textit{universal approximation theorems} (UAT); notable contributions include \citet{cybenko89approximation} and \citet{hornik91approximation}. 
These theorems establish the topological density of sets of neural networks in various topological spaces. 
One speaks of the universal approximation property \citep{kratsios21uap} of a class of neural networks. 
Unfortunately, these theorems are usually non-constructive. 
To numerically find optimal neural networks, one typically combines backpropagation (see, for example, \citet{rumelhart86representations}) with ideas from stochastic approximation \citep{robbins51approximation,kiefer52estimation,dvoretzky56approximation}. 

\begin{definition}[Deep feedforward neural network]\label{neural-network} 
Let $\psi \from \re \to \re$ be a bounded, measurable and non-constant map. 
Given $k, l, n \in \na$, we denote by $\mathcal{NN}_{k,l,n}(\psi)$ the set of neural networks with an $n$-dimensional input layer, one neuron with identity activation function in the output layer, $k$ hidden layers, and at most $l$ nodes with $\psi$ as activation function in each hidden layer (cf. \citet[Definition~3.1]{kidger20narrow}). 
We call elements from $\mathcal{NN}_{k,l,n}(\psi)$ \textit{deep feedforward neural networks}, or simply \textit{deep neural networks}. 
\end{definition} 

\begin{definition}[Algorithmic reinsurance policy]\label{algo-strat} 
We denote by $\mathcal{A}^{\mathrm{nn}}$ the set of all proportional reinsurance strategies $b$ that satisfy $b_{i} = \sigma \circ f(Y_{0}, Y_{1}, \hdots, Y_{i})$ for some $f \in \mathcal{NN}_{k_{i}, l_{i}, (i+1)r}(\psi)$ with $k_{i}, l_{i} \in \na$, for each $i=0,1,\hdots,n$. 
Here, $\sigma \from \re \to (0,1)$ denotes the logistic function given by $\sigma(x) = \exp(x) / (1+\exp(x))$. 
\end{definition} 

Definition \ref{neural-network} provides one example of a class of neural networks we can use, but other choices are possible. 
In light of Definition \ref{algo-strat}, one could consider for example recurrent neural networks or long short-term memory (LSTM) networks (see for example \citet{hochreiter97lstm}). 
We also refer to UATs for deep, narrow networks \citep{kidger20narrow} and for randomized neural networks \citep{huang06incremental}. 
Note that feedforward neural networks are usually defined with a linear readout map (as in Definition~\ref{neural-network}). 
In order to ensure that algorithmic reinsurance policies are valid proportional reinsurance policies assuming values in $[0,1]$, we apply the logistic function $\sigma$ to the network output in Definition~\ref{algo-strat}. 

In this paper, we restrict ourselves to proportional reinsurance strategies. 
However, the same techniques can be applied to other types of reinsurance strategies. 
For example, excess-of-loss (XL) policies could be written as $b_{i}(Z) = \min\{Z, f(Y_{0}, Y_{1}, \hdots, Y_{i})\}$ for some deep neural network $f$. 
We leave the investigation and comparison of results for various algorithmic reinsurance treaties to future research. 

For the sake of notational simplicity, given $b \in \mathcal{A}$, $\beta \in [0,1]$ and a smooth surrogate loss function $g$, let 
\begin{equation}
u_{\beta}(Y, b) = \beta u(X_{n}^{b}) - (1-\beta)g(F_{b}(Y)). 
\end{equation} 
From a numerical perspective, solving the optimization problem~\eqref{goal} via algorithmic reinsurance policies motivates the use of Monte Carlo methods. 
Here, we first generate a finite amount of data points $(Y^{j})_{j=1}^{m}$ and replace, for each $b \in \mathcal{A}$, the expectation and probability appearing in~\eqref{goal} by the empirical average 
\begin{equation}\label{MC-goal}  
\frac{1}{m} \sum_{j=1}^{m} u_{\beta}(Y^{j}, b). 
\end{equation} 
The summation over a finite number $m \in \na$ of data points allows us to re-interpret Equation~\eqref{MC-goal} as the expectation of $u_{\beta}(Y, b)$ over a measure which assigns equal probability $1/m$ to every outcome. 
This motivates the assumption that the underlying probability space $\Omega$ is finite, in which case all reinsurance strategies, including all algorithmic reinsurance policies, are admissible, i.e.\ in particular $\mathcal{A}^{\mathrm{nn}} \subset \mathcal{A}$. 
Moreover, since in this case every singleton set $\{\omega\}$ for each $\omega \in \Omega$ is measurable (for otherwise we could not assign the probability $1/m$ to it), it is also natural to assume that $\Omega$ is endowed with the power set $\mathcal{P}(\Omega)$ to form a measurable space. 

\begin{theorem}\label{algo-approx} 
Assume that $\Omega$ is a finite set, and that $\mathcal{F} = \mathcal{P}(\Omega)$. 
Then, for every $\beta \in [0,1]$ and $\varepsilon > 0$, there exists an algorithmic reinsurance policy $b^{\mathrm{nn}} \in \mathcal{A}^{\mathrm{nn}}$ such that 
\begin{equation} 
\mathbb{E}[u_{\beta}(Y,b^{\mathrm{nn}})] > \sup_{b \in \mathcal{A}} \mathbb{E}[u_{\beta}(Y,b)] - \varepsilon. 
\end{equation} 
\end{theorem}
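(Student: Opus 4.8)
The plan is to use the finiteness of $\Omega$ to recast the optimisation over $\mathcal{A}$ as a continuous problem on a compact cube, pick an (almost) optimal admissible strategy there, and then show that this strategy is reproduced, up to arbitrarily small error, by a strategy of the form $\sigma\circ(\text{neural network})$, so that continuity of the objective transfers near-optimality to $\mathcal{A}^{\mathrm{nn}}$.

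\textbf{Step 1 (reduction to a continuous problem on a cube).} Since $\Fcal_{i}$ is generated by $(Y_{0},\dots,Y_{i})$, the Doob--Dynkin lemma gives, for every $b\in\mathcal{A}$, measurable maps $\phi^{b}_{i}$ with $b_{i}=\phi^{b}_{i}(Y_{0},\dots,Y_{i})$ $\PP$-a.s. As $\Omega$ is finite, the support $\mathcal{R}_{i}$ of the law of $(Y_{0},\dots,Y_{i})$ is a finite set, so $b$ (up to $\PP$-a.s.\ equality) is encoded by the finite family $\vec{b}:=(\phi^{b}_{i}(y))_{0\le i\le n,\ y\in\mathcal{R}_{i}}$ in the compact cube $K:=\prod_{i=0}^{n}[0,1]^{\mathcal{R}_{i}}$; conversely, because $\Fcal=\mathcal{P}(\Omega)$, every point of $K$ defines an admissible strategy (here all strategies are admissible). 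Under this identification $\Phi(b):=\mathbb{E}[u_{\beta}(Y,b)]=\sum_{\omega\in\Omega}\PP(\{\omega\})\,u_{\beta}(Y(\omega),b)$ is a finite sum; the recursion \eqref{eq:surplus} depends continuously on the retention levels (it is affine in each $b_{i}$ besides the continuous term $c(b_{i})$), $u$ and $g$ are continuous ($g$ even smooth), and $x\mapsto\min_{i}x_{i}$ is continuous, so $\Phi$ is continuous — hence uniformly continuous and bounded — on $K$. Fix $\varepsilon>0$. By compactness the supremum in \eqref{goal} is attained; pick a maximiser $b^{\star}\in\mathcal{A}$, and let $\delta>0$ be such that any two parameter vectors in $K$ at sup-distance $<\delta$ yield $\Phi$-values at distance $<\varepsilon/2$.

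\textbf{Step 2 (realising $b^{\star}$ by an algorithmic policy).} The one genuine obstacle is that $\sigma$ takes values in the open interval $(0,1)$, so $\sigma\circ f$ cannot equal the retention values $0$ or $1$ that $b^{\star}$ may use. We therefore clip: choose $\tilde\phi_{i}(y)\in[\eta,1-\eta]$ with $|\tilde\phi_{i}(y)-\phi^{b^{\star}}_{i}(y)|\le\eta$ for a small $\eta\in(0,\delta/2)$. Then $h_{i}:=\sigma^{-1}\circ\tilde\phi_{i}$ is a well-defined bounded function on the finite set $\mathcal{R}_{i}$. By a universal approximation theorem for one-hidden-layer networks with bounded, non-constant, measurable activation — for instance \citet{hornik91}, which yields density in $L^{p}(\mu)$ for every finite measure $\mu$; taking $\mu$ the law of $(Y_{0},\dots,Y_{i})$, $L^{p}(\mu)$-convergence is just uniform convergence on $\mathcal{R}_{i}$ — there is, for each $i$, some $f_{i}\in\mathcal{NN}_{k_{i},l_{i},(i+1)r}(\psi)$ (take $k_{i}=1$ and $l_{i}$ large) with $\sup_{y\in\mathcal{R}_{i}}|f_{i}(y)-h_{i}(y)|$ arbitrarily small. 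Since $\sigma$ is $\tfrac14$-Lipschitz, $\sup_{y\in\mathcal{R}_{i}}|\sigma(f_{i}(y))-\tilde\phi_{i}(y)|$ is small, and with the clipping error this gives $\sup_{y\in\mathcal{R}_{i}}|\sigma(f_{i}(y))-\phi^{b^{\star}}_{i}(y)|<\delta$. Define $b^{\mathrm{nn}}_{i}:=\sigma\circ f_{i}(Y_{0},\dots,Y_{i})$ for $i\le n-1$ and take $b^{\mathrm{nn}}_{n}$ of the same form with value near $1$ (it does not enter $\Phi$); then $b^{\mathrm{nn}}\in\mathcal{A}^{\mathrm{nn}}$.

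\textbf{Step 3 (conclusion).} The parameter vectors of $b^{\mathrm{nn}}$ and $b^{\star}$ are at sup-distance $<\delta$, so $\mathbb{E}[u_{\beta}(Y,b^{\mathrm{nn}})]=\Phi(b^{\mathrm{nn}})>\Phi(b^{\star})-\varepsilon/2=\sup_{b\in\mathcal{A}}\Phi(b)-\varepsilon/2>\sup_{b\in\mathcal{A}}\Phi(b)-\varepsilon$, which is the assertion. The main difficulty is the mismatch between the closed admissible range $[0,1]$ of proportional retention levels and the open range $(0,1)$ imposed by the logistic read-out, compounded by the need to rely only on the weak hypotheses on $\psi$ in Definition~\ref{neural-network}; clipping together with the $L^{p}$-version of the universal approximation theorem — which on a finite sample is plain uniform approximation — handles both at once. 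The only other point to check carefully is the quantitative continuity of $\Phi$ in the retention levels, which is immediate from the finiteness of $\Omega$ and the continuity of $u$, $c$ and $g$.
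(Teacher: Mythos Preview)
Your proof is correct and follows essentially the same approach as the paper's: Doob--Dynkin representation, clipping away from $\{0,1\}$, applying $\sigma^{-1}$, Hornik's universal approximation theorem, and then transferring back via continuity of $\sigma$ and of the objective, all leaning on the finiteness of $\Omega$. Your packaging via compactness of the parameter cube and uniform continuity of $\Phi$ is slightly cleaner than the paper's sequence/subsequence argument (and you exploit that $L^{p}(\mu)$-approximation on a finitely supported $\mu$ is simply pointwise approximation on the support), but the underlying idea is identical.
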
 

\begin{proof} 
The proof partially relies on some ideas from the proof of~\citet[Proposition~4.3]{buehler19deep}. 
Let $b^{*} \in \mathcal{A}$ be an $\varepsilon / 2$-optimal strategy, i.e.\ let $b^{*} \in \mathcal{A}$ be such that $\mathbb{E}[u_{\beta}(Y,b^{*})] > \sup_{b \in \mathcal{A}} \mathbb{E}[u_{\beta}(Y,b)] - \varepsilon/2$. 
Fix one time point $t_{i}$. 
Since $b^{*}$ is $\FF$-adapted, we have that $b_{i}^{*}$ is $\Fcal_{i}$-measurable.
Recall that $\Fcal_{i}$ is the smallest $\sigma$-algebra that makes $Y_{0}, Y_{1}, \hdots, Y_{i}$ measurable. 
An application of Doob--Dynkin's lemma implies the existence of a Borel-measurable map $f_{i} \from \re^{r \times (i+1)} \to [0,1]$ such that $b_{i}^{*} = f_{i}(Y_{0}, Y_{1}, \hdots, Y_{i})$. 

Let $\mu$ be the Borel probability measure that is induced by $Y_{0}, Y_{1}, \hdots, Y_{i}$ on $\re^{r \times (i+1)}$. 
Since $b_{i}^{*}$ is bounded (as it assumes values in $[0,1]$), we have $b_{i}^{*} \in L^{p}(\PP)$ for every $p > 0$ and thus, in particular, $f_{i} \in L^{2}(\mu)$. 

Consider the sequence $(f_{i}^{k})_{k \in \na}$ of functions given by $f_{i}^{k}(x) = 1-1/k$ if $f_{i}(x) =1$, $f_{i}^{k}(x) = 1/k$ if $f_{i}(x) = 0$, and $f_{i}^{k}(x) = f_{i}(x)$ otherwise. 
Clearly, $f_{i}^{k} \to f_{i}$ pointwise and thus, $f_{i}^{k}(Y_{0}, Y_{1}, \hdots, Y_{i}) \to f_{i}(Y_{0}, Y_{1}, \hdots, Y_{i}) = b_{i}^{*}$ pointwise. 
We repeat the same construction for all time points, and construct a sequence $(\hat{b}^{k})_{k \in \na}$ of reinsurance policies, where $\hat{b}_{i}^{k} = f_{i}^{k}(Y_{0}, Y_{1}, \hdots, Y_{i})$ for each $i = 0, 1, \hdots, n-1$. 
Since $\Omega$ is finite, 
\begin{equation*} 
\mathbb{E}[u_{\beta}(Y,\hat{b}^{k})] \to \mathbb{E}[u_{\beta}(Y,b^{*})], \quad k \to \infty. 
\end{equation*} 
We may therefore assume, without loss of generality, that $f_{i}$ takes values in a closed subset $[1/k,1-1/k]$ of $[0,1]$ that does neither contain $0$ nor $1$, for a sufficiently large $k \in \na$. 

The sigmoid function $\sigma$, being a continuous bijection with continuous inverse $\sigma^{-1} \from (0,1) \to \re$, establishes a homeomorphism between $\re$ and $(0,1)$. 
Since $f_{i} \in [1/k,1-1/k]$, it follows that $\sigma^{-1}\circ f_{i}$ is bounded, and thus $\sigma^{-1}\circ f_{i} \in L^{2}(\mu)$. 

The universal approximation theorem \citep{hornik91approximation} ensures the existence of a sequence $(\tilde{f}_{i}^{k})_{k \in \na}$ of shallow feedforward neural networks (i.e.\ feedforward neural networks with one hidden layer), such that $\tilde{f}_{i}^{k} \to \sigma^{-1}\circ f_{i}$ in $L^{2}(\mu)$, as $k \to \infty$.
But this implies, denoting $\tilde{b}_{i}^{k} = \tilde{f}_{i}^{k}(Y_{0}, Y_{1}, \hdots, Y_{i})$, that $\tilde{b}_{i}^{k} \to \sigma^{-1} \circ b_{i}^{*}$ in $L^{p}(\PP)$, as $k \to \infty$, and thus also in probability.
Upon passing to a subsequence, which we also denote $(\tilde{b}_{i}^{k})_{k \in \na}$, we may assume that convergence holds outside of a $\PP$-null set, in which case $\sigma \circ \tilde{b}_{i}^{k} \to b_{i}^{*}$, $\PP$-almost surely. 

If we repeat the above arguments for every $t_{i}$, we obtain a sequence of processes $(b^{k})_{k \in \na}$ given by $b_{i}^{k} = \sigma \circ \tilde{b}_{i}^{k}$, such that $b_{i}^{k} \to b_{i}^{*}$ for every $t_{i} \in T$, where convergence holds outside a $\PP$-null set. 

This implies that $X_{i}^{b^{k}} \to X_{i}^{b^{*}}$ for every $t_{i}$, where convergence holds outside a $\PP$-null set, and consequently $F_{b^{k}}(Y) \to F_{b^{*}}(Y)$, $\PP$-almost surely, which implies that $u(X_{n}^{b^{k}}) \to u(X_{n}^{b^{*}})$ and $g(F_{b^{k}}(Y)) \to g(F_{b^{*}}(Y))$, $\PP$-almost surely as $k \to \infty$. 
Since $\Omega$ is finite, 
\begin{equation*} 
\mathbb{E}[u_{\beta}(Y,b^{k})] \to \mathbb{E}[u_{\beta}(Y,b^{*})], \quad k \to \infty. 
\end{equation*} 
We can thus find $k^{*}$ large enough, such that $| \mathbb{E}[u_{\beta}(Y,b^{*})] - \mathbb{E}[u_{\beta}(Y,b^{k^{*}})] | < \varepsilon / 2$, and set $b^{\mathrm{nn}} = b^{k^{*}}$. 
\end{proof} 

\begin{remark} 
Theorem~\ref{algo-approx} provides theoretical justification for solving the optimization problem~\eqref{goal} via algorithmic reinsurance policies. 
However, it is non-constructive in the sense that it does not shed light on how the policy $b^{\mathrm{nn}}$ can be found in practice. 
In order to solve the problem via empirical risk minimization in the next section, we will proceed in two steps: 
\begin{enumerate} 
\item Approximate the indicator function used to form the ruin probability by a surrogate loss function. This step is justified by Proposition~\ref{surrogate-limit}. 
\item Solve the surrogate problem by stochastic approximation over the set $\mathcal{A}^{\mathrm{nn}}$ of algorithmic policies. 
\end{enumerate} 
The final result is subject to two sources of numerical error: (1) the approximation of the indicator function, and (2) the approximation of the solution to the surrogate problem with algorithmic policies. 
However, both of these two errors can be made arbitrarily small. 
\end{remark} 

\begin{remark} 
Classical universal approximation theorems which are formulated for feedforward neural networks usually assume a linear readout, which is not bounded. 
In order to obtain a valid proportional reinsurance policy, we need to guarantee that our algorithmic strategies assume values in $[0,1]$. 
In Theorem~\ref{algo-approx} we were able to do this with some simple tricks for our specific model setting. 
For a more general treatment of non-Euclidean universal approximation, see \citet{kratsios20euclidean}. 
\end{remark} 

\section{Numerical experiments}\label{sec:numerics} 
In this section, we present a numerical example where the surplus is modeled by a Cram{\'e}r--Lundberg process perturbed by an Ornstein--Uhlenbeck (OU) process. 
Including the OU process adds complexity, making the problem more challenging and realistic. 
The OU process may, for instance, represent random fluctuations, such as small claims and variations in premium income. 

We assume that the claims $Z$ are independent, identically exponentially distributed with mean $\mu$, $N$ represents a discretized Poisson process with intensity $\lambda$, and $Z$ and $N$ are independent. 
The insurer charges premia according to the expected value principle with safety loading $\eta>0$, i.e.\ $p_{i}=(1+\eta)\lambda\mu(t_{i+1}-t_{i})$. 
Similarly, the reinsurer charges premia according to the expected value principle with safety loading $\theta>\eta$, i.e.\ $\tilde{p}_{i}=(1+\theta)\lambda\mu(1-b_{i})(t_{i+1}-t_{i})$. 
The process $L$ follows the dynamics 
\begin{equation}
L_{i+1} = L_{i} + \xi (\kappa - L_{i})(t_{i+1}-t_{i}) + \nu (t_{i+1}-t_{i}) \varepsilon_{i}, 
\end{equation} 
where $\varepsilon_{i}$ are independent, identically distributed shocks with $\mathcal{L}(\varepsilon_i) = \mathcal{N}(0,1)$. 
We assume that the initial capital is positive to avoid starting in ruin. For our simulations, we take an arbitrarily chosen initial value of 1. 

The utility function is of exponential type, $u(x) = -\exp(-\alpha x)$, with risk-aversion coefficient $\alpha$. 
Table \ref{base-parameters} summarizes the parameters for our base model. 
The values for $\lambda$, $\mu$, $\eta$ and $\theta$ are taken from \citet{schmidli01proportional}. 
The choice $\beta=0.4$ is illustrative and serves to generate a non-trivial trade-off between utility and ruin; other choices of $\beta$ lead to different points in Figure~\ref{trade-off-front}. 
For the neural network, we chose a two-hidden-layer topology with hyperbolic tangent (tanh) as activation function in the hidden layers and logistic activation $\sigma$ in the output layer. 
The hidden layers contain 32 nodes each. 
The neural network takes the surplus as input; additional inputs were tested but provided negligible improvements. 
All computations were performed using Python, using the Keras deep learning API for constructing and training the neural networks. 

\begin{table}[htbp] 
\centering
\caption{Parameters for the base model.}\label{base-parameters}
\begin{threeparttable} 
\begin{tabular}{@{}ll|ll@{}} 
\toprule
&\textbf{Model parameter} & \textbf{Value}\\
\midrule
&Initial wealth & $x = 1$& \\ 
&Time horizon & $T=10$ &\\ 
&Number of time steps & $n=10$ &\\ 
&Claim arrival intensity & $\lambda = 1.0$ &\\ 
&Expected claim size & $\mu = 1.0$ &\\ 
&Safety loading of insurer & $\eta = 0.50$ &\\ 
&Safety loading of reinsurer & $\theta = 0.70$& \\ 
&Risk aversion coefficient & $\alpha = 0.30$& \\ 
&Tuning parameter & $\beta = 0.40$& \\ 
& Surrogate loss parameter & $\gamma = 10$& \\ 
&OU mean-reversion level & $\kappa = 0$& \\ 
&OU mean-reversion speed & $\xi = 0.20$& \\ 
&OU volatility coefficient & $\nu = 0.05$& \\ 
\bottomrule
\end{tabular}
\end{threeparttable}
\end{table} 

Neural network training was performed on 2000 batches, each with a batch size of $2^{14}$, using the Adam optimizer \citep{kingma15adam} with an initial learning rate of $10^{-3}$. 
The learning rate was decreased by a factor of 10 after 10 epochs without improvement, with a minimum learning rate of $10^{-5}$. 
Early stopping was employed after 20 epochs without improvement. 
Distributions, expected utilities, and ruin probabilities were computed on a test set of size $2^{25}$.

First, we want to numerically verify that the expected surrogate loss indeed approximates the ruin probability. 
To achieve this, we compute the ruin probability for our model without reinsurance (i.e.\ setting $b\equiv1$) and compare the obtained value ($\approx 34.1 \%$) with the expected surrogate loss $\mathbb{E}[g_{\gamma}(F_{b}(Y))]$ for various values of $\gamma$. 
As shown in Figure \ref{ruin-probs}, the expected surrogate loss approximates the ruin probability well when $\gamma$ is sufficiently large. 
Based on these results, we fix $\gamma=10$ for the subsequent analysis. 

\begin{figure}[htbp] 
\centering 
\captionsetup{width=0.75\textwidth,format=plain} 
\includegraphics[width=0.75\linewidth]{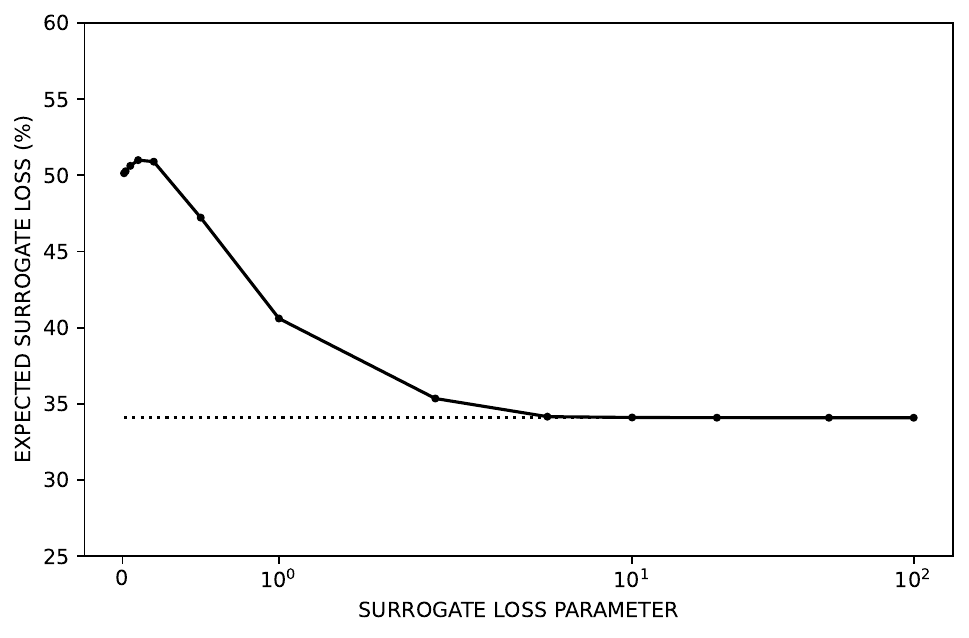} 
\caption{Expected surrogate loss $\mathbb{E}[g_{\gamma}(F_{b}(Y))]$ for various parametrizations of the surrogate loss function in the case of no reinsurance ($b \equiv 1$). The dotted horizontal line corresponds to the ruin probability ($\approx 34.1 \%$).} 
\label{ruin-probs} 
\end{figure} 

The parameter $\beta$, given in \eqref{goal}, significantly influences the optimal retention level as a function of the surplus. 
For instance, in \citet{schmidli01proportional}, where the ruin probability is minimized in a continuous-time setting, the optimal reinsurance strategy turns out to be constantly 1 until a certain surplus level is reached. 
After that, the retention level drops to a much lower value and converges to a constant as the surplus goes to infinity. 
In our model, by incorporating the ruin probability as an additional objective, the reinsurance policy must penalize scenarios where the insurer's wealth becomes negative at any point in time up to maturity $T$. 

In our simulations, we noticed that it sufficed to optimize over feedback strategies $b=(b_{i})_{i=0}^{n}$ that do not depend explicitly on time, i.e.\ $b_{i}(x) = \overline{b}(x)$ for $i=0,1,\hdots,n-1$ (with $b_{n} \equiv 1$) and for some function $\overline{b}\from \re \to [0,1]$, as allowing for time dependence did not noticeably improve the performance of the strategy. 
Therefore, we can interpret our obtained strategies as functions depending on the initial capital. 
Figure \ref{retention-levels} shows the optimal retention levels depending on the initial capital for the base case $\beta=0.4$ and the boundary cases $\beta=1$ (no penalization through ruin probability, only utility maximization) and $\beta=0$ (pure ruin probability minimization, no utility in the target functional). 

\begin{figure}[htbp] 
\centering 
\captionsetup{width=0.75\textwidth,format=plain} 
\includegraphics[width=0.75\linewidth]{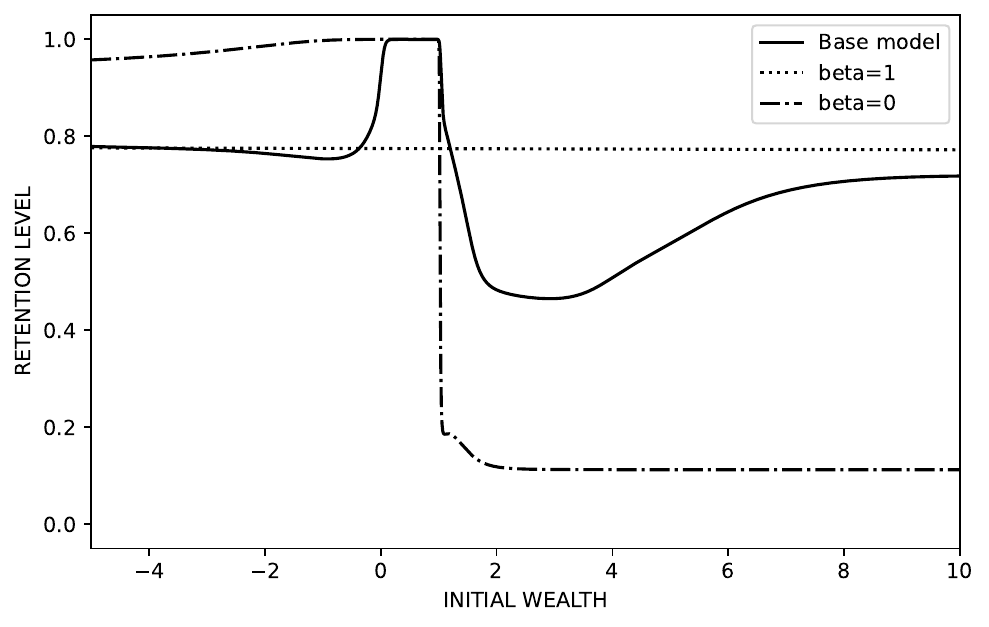}
\caption{Optimal retention levels $\overline{b}(x)$ for different values of $\beta$. The base model assumes $\beta=0.4$.} 
\label{retention-levels} 
\end{figure} 

One can check that the strategy corresponding to the case of pure ruin probability minimization has the same form (on the positive real line) as the strategy in the continuous time setting for the classical risk model, see \citet[p.~53]{schmidli08control}. 
In the base model, the presence of the utility function in the target functional enforces an increase in the optimal retention level starting from approximately $x=3$. 
In the tradeoff between more safety (a smaller retention level) and more utility, the utility wins in the long run. 
Obviously, by maximizing exponential utility with no consideration of the ruin probability, the optimal strategy does not depend on the surplus (dotted line in Figure~\ref{retention-levels}). 

Recall that our objective contains two components: the expected utility of terminal wealth, and the expected surrogate loss approximating the ruin probability. 
By varying  $\beta$, we can generate optimal reinsurance strategies that approximately achieve Pareto-efficient solutions, where any improvement in one objective requires a compromise in the other. 
It is important to note that our results are subject to numerical errors due to: 
\begin{enumerate}
\item The surrogate loss function for the ruin probability, 
\item The finite size of training and test datasets, and 
\item The fact that neural networks can in general only approximate optimal policies up to some~$\varepsilon$. 
\end{enumerate}

Figure \ref{trade-off-front} demonstrates the trade-off between expected utility of terminal wealth and ruin probability. 
Each individual point represents an approximate Pareto-efficient solution for different choices of $\beta \in [0,1]$, where improving one objective comes at the expense of the other. 
The star indicates the values obtained without reinsurance (i.e.\ $b\equiv1$), highlighting the benefits of optimal reinsurance strategies. 
In particular, the figure highlights that, for our choice of parameters, reinsurance is always more favorable than non-reinsurance. 
However, by interpreting Figure \ref{trade-off-front} one should keep in mind that the expected utility and the survival probability are not as counterbalancing as they are in a mean-variance framework. 
Indeed, negative capital is in some sense penalized by a lower utility. 

\begin{figure}[htbp] 
\centering 
\captionsetup{width=0.75\textwidth,format=plain} 
\includegraphics[width=0.75\linewidth]{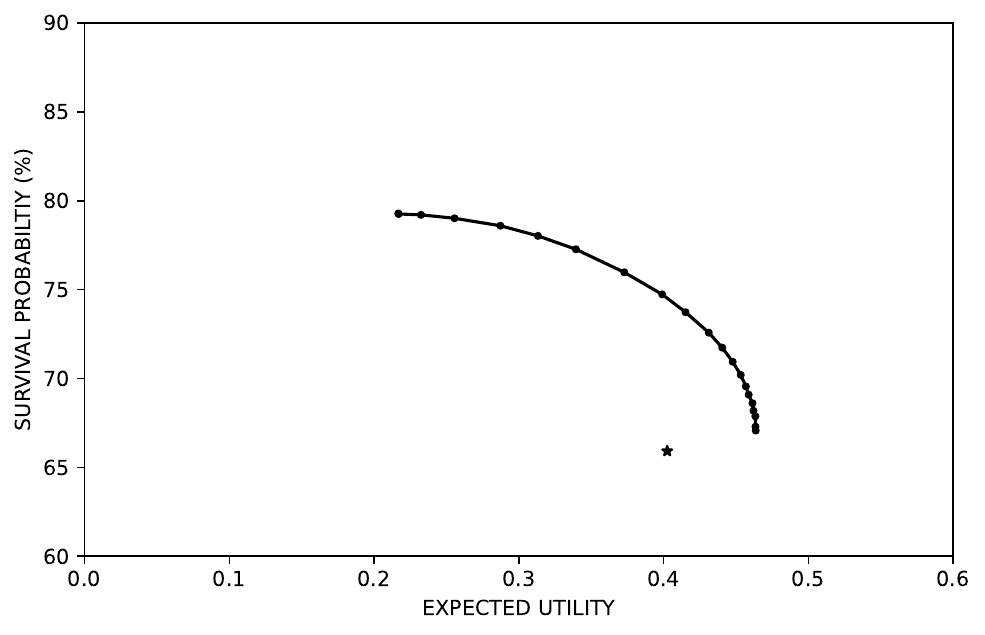}
\caption{Trade-off between expected utility of terminal wealth and survival probability obtained through optimal algorithmic reinsurance polices for different choices of $\beta \in [0,1]$. The star denotes the corresponding values obtained when no reinsurance is available.}
\label{trade-off-front} 
\end{figure} 

Finally, we investigate the effect of a time-varying exposure on the optimal retention rate. 
To this end, we replace the constant intensity $\lambda$ of the Poisson process $N$ by a deterministic, time-varying function $\lambda(t)$. 
The interpretation is, that the exposure is growing over time, i.e.\ we write $\lambda(t) = v(t) \lambda_0$, where $v(t)$ is a growing function that represents exposure, and $\lambda_0 = 1.0$ is a baseline intensity. 
Note that for this model, the total premium income increases, while the distribution of claim sizes remains unchanged. 
We assume that $v$ is normalized to 1 at time $0$, therefore starting from $v(0) = 1$, and consider 3 scenarios: (1) $v(t)$ remains constant over time, (2) $v(t)$ declines by $10\%$ per year, and (3) $v(t)$ grows by $10\%$ per year. 
Note that the first scenario corresponds to the baseline case already discussed in Figure~\ref{retention-levels} above. 

Optimizing retention levels in models with dynamic parameters represents a theoretical and numerical challenge. 
Figure~\ref{exposure-impact} highlights that the optimal strategies assume highly complex behavior. 
The case of decreasing exposure sees near full retention, except for the case when wealth remains in an interval starting just below 2 and ending just above 4. 
In contrast, the case of increasing exposure sees zero retention for negative wealth, which rapidly increases and then falls again, before gradually increasing. 
In all cases, the optimal retention levels appear to be highly non-linear, a behavior that has also been reported in \citet{hipp03dynamic}. 

\begin{figure}[htbp] 
\centering 
\captionsetup{width=0.75\textwidth,format=plain} 
\includegraphics[width=0.75\linewidth]{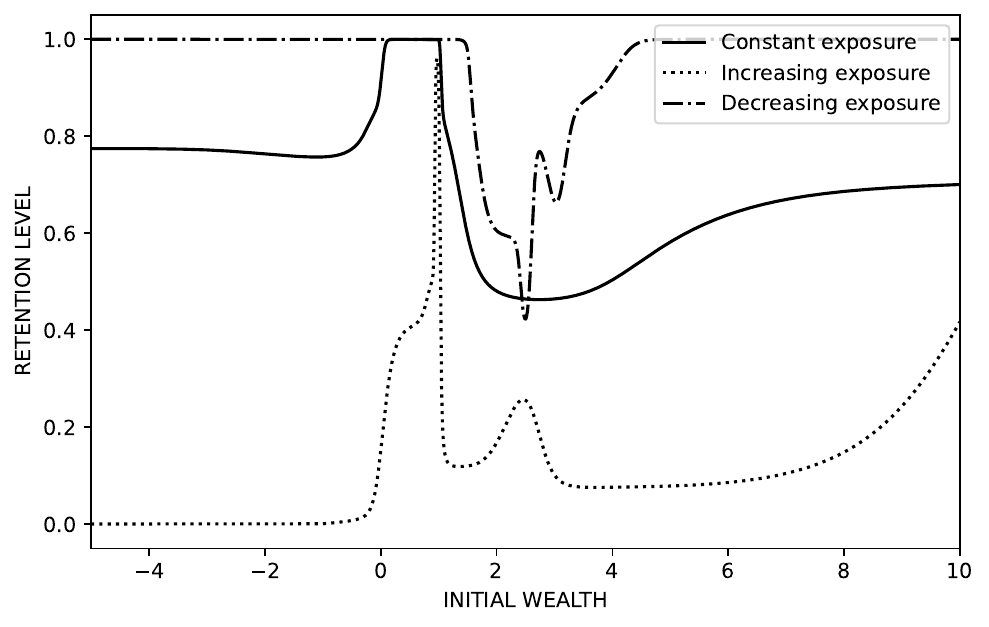}
\caption{Optimal retention levels for time-varying exposure.} 
\label{exposure-impact} 
\end{figure} 

\section{Conclusions}\label{sec:conclusion} 
We introduce a novel framework for optimizing reinsurance strategies using a deep learning approach. 
The target functional consists of the expected utility of terminal wealth perturbed by a modified Gerber--Shiu penalty function. 
It allows to balance between maximizing the expected utility of terminal wealth, and minimizing the probability of ruin, depending on the individual preferences of the insurer. 

We draw connections to concepts from binary classification and surrogate loss functions. 
This enables the problem to be addressed using empirical risk minimization methods. 
Combined with stochastic gradient descent, it allows for efficient optimization of algorithmic reinsurance policies, even in complex model settings. 

Our numerical findings highlight the ability of our method to interpolate between the problems of maximizing the expected utility of terminal wealth, and minimizing the probability of ruin. 
Future research could explore other optimization targets and reinsurance forms as well as more complex, higher-dimensional models with correlated business lines. 
Moreover, it would be interesting to explore optimal algorithmic reinsurance strategies that are robust with respect to parameter uncertainties, such as the intensity of the Poisson process, or the expected claim size. 
Finally, another interesting topic would be to optimize algorithmic reinsurance polices under distributional constraints on the wealth process. 

\section*{Acknowledgements}
Aleksandar Arandjelovi{\'c} acknowledges support from the International Cotutelle Macquarie University Research Excellence Scholarship. 
A significant part of the work on this project was carried out while Aleksandar Arandjelovi{\'c} was affiliated with the Research Unit of Financial and Actuarial Mathematics (FAM) at TU Wien in Vienna, Austria, and the Department of Actuarial Studies and Business Analytics at Macquarie University in Sydney, Australia. 
This work was presented at the Scandinavian Actuarial Conference in August 2024 (University of Copenhagen, Denmark) and at the European Actuarial Journal Conference in September 2024 (ISEG, Lisbon, Portugal). 
The authors are grateful for the constructive comments received from participants at these events. 
The authors declare that they have no conflicts of interest. 

\bibliography{bibliography} 

\end{document}